\newtheorem{df}{Definition}
\newtheorem{lem}[df]{Lemma}
\newtheorem{thm}[df]{Theorem}
\begin{document}
\begin{frontmatter}
\title{Oriented chromatic number of Halin graphs}

\author{Janusz Dybizba\'nski}
\ead{jdybiz@inf.ug.edu.pl}
\author{Andrzej Szepietowski}
\ead{matszp@inf.ug.edu.pl}
\address{Institute of Informatics, University of Gda\'nsk\\ Wita Stwosza 57,
80-952 Gda\'nsk, Poland}

\begin{abstract}
Oriented chromatic number of an oriented graph $G$ is the minimum order of an oriented graph $H$ such that $G$ admits a homomorphism to $H$. 
The oriented chromatic number of an unoriented graph $G$ is the maximal
chromatic number over all possible orientations of $G$.
In this paper, we prove that every Halin graph has oriented chromatic number at most 8, improving a previous bound by Hosseini Dolama and Sopena, and confirming the conjecture given by Vignal.
\end{abstract}

\begin{keyword}
Graph coloring \sep oriented graph coloring \sep Halin graph \sep oriented chromatic number 
\end{keyword}
\end{frontmatter}
\section{Introduction}
{\itshape Oriented coloring} is a coloring of the vertices of an oriented graph
$G$ such that: (1) no two neighbors have the same color, (2)
for every two arcs $(t,u)$ and $(v,w)$, either $\beta(t)\ne \beta(w)$ or
$\beta(u)\ne \beta(v)$.
In other words, if there is an arc leading  from the color $\beta_1$ to
$\beta_2$, then no arc leads from $\beta_2$ to $\beta_1$.

It is easy to see that an oriented graph $G$ can be colored by $k$ colors if
and only if there exists a homomorphism from $G$ to an oriented
graph $H$ with $k$ vertices. In this case we shall say that $G$ is colored by $H$.

The {\itshape oriented chromatic number $\overrightarrow\chi(G)$ of an oriented graph $G$}  is the
smallest number $k$ of colors needed to color $G$, and the
 {\itshape  oriented chromatic number $\overrightarrow\chi(G)$ of an unoriented graph $G$}  is the
maximal chromatic number over all possible orientations of $G$.
The {\itshape oriented chromatic number of a family of graphs} is the
maximal chromatic number over all possible graphs of the family .

Oriented coloring has been studied  in recent years
\cite{bor:ii:dis,dol:ii:dis,fer:i:ipl,hos:ii:ipl,kos:ii:gra,ras:ii:ipl,so2:ii:dis,so1:ii:gra,
so4:ii:ipl,so3:ii:bor,sze:ii:ipl},
see \cite{Sop:i:pre} for a short survey of the main results. Several authors
established or gave bounds on the oriented 
chromatic number for some families of graphs, such as: oriented planar graphs
\cite{ras:ii:ipl},
outerplanar graphs 
\cite{so1:ii:gra,so4:ii:ipl},
graphs with bounded degree three 
\cite{kos:ii:gra,so1:ii:gra,so3:ii:bor}, 
$k$-trees \cite{so1:ii:gra}, graphs with given
excess \cite{dol:ii:dis}, grids 
\cite{dyn:i:ipl,fer:i:ipl,sze:ii:ipl} or hexagonal grids \cite{biel:i:ann}.

In this paper we focus on the oriented chromatic number of Halin graphs. A
{\itshape Halin graph} $H$ is an unoriented planar graph which admits a planar embedding such that deleting the edges of its external face ($F_0$) gives a tree with at least three leaves. The vertices on
$F_0$ are called exterior vertices of $H$, and the remaining vertices are
called interior vertices of $H$.

In~\cite{Vig} Vignal proved that every oriented Halin graph has oriented chromatic number at most 11 and conjectured that the oriented chromatic number of every oriented Halin graph is at most $8$. Hosseini Dolama and Sopena proved in \cite{hos:ii:ipl} that every oriented Halin graph has oriented chromatic number at most 9 and they presented an oriented Halin graph with oriented chromatic number equal to 8. Figure~\ref{dol} presents another example of Halin graph with oriented chromatic number equal to 8. Determining the exact value of oriented chromatic number of Halin graph is an open problem presented by Sopena in~\cite{Sop:i:pre}.

\begin{figure}[htb]
\centering
\includegraphics{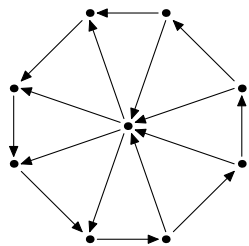}
\caption{Halin graph with oriented chromatic number equal to 8.}
\label{dol}
\end{figure}

In this paper we shall prove that every oriented Halin graph can be colored with at most 8 colors. Hence, the oriented chromatic number of the family of Halin graphs is equal to 8.

\section{Preliminaries} 

We recall that $T_7$ is the tournament build from the non-zero quadratic residues of 7,
see \cite{bor:ii:dis, fer:i:ipl, Fr70, so1:ii:gra, so3:ii:bor}.
More precisely, $T_7$ is the graph with vertex set $\{0, 1, \dots, 6\}$ and such that $(i,j)$
is an arc if and only if $j-i = 1,2,\textrm{ or }4\pmod 7$.

\begin{lem}[see~\cite{sze:ii:ipl}]\label{l3}
For every $a\in\{1,2,4\}$ and $b\in\{0, 1, \dots, 6\}$,
the function $\phi(x)=ax+b \pmod7$ is an automorphism in $T_7$.
\end{lem}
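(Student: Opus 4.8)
The plan is to verify the two defining properties of a graph automorphism: that $\phi$ is a bijection of the vertex set $\{0,1,\dots,6\}$, and that it preserves the arc relation in both directions. Bijectivity is immediate, since $a\in\{1,2,4\}$ is coprime to $7$ and hence invertible modulo $7$; thus $\phi$ has the two-sided inverse $\phi^{-1}(y)=a^{-1}(y-b)\pmod 7$. The only real content is therefore arc preservation.

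The key observation is that for any two vertices $i,j$ we have $\phi(j)-\phi(i)=a(j-i)\pmod 7$, so $\phi$ scales all differences by the fixed factor $a$. By definition $(i,j)$ is an arc exactly when $j-i$ lies in the set $Q=\{1,2,4\}$ of nonzero quadratic residues modulo $7$, while $(\phi(i),\phi(j))$ is an arc exactly when $a(j-i)\in Q$. Hence it suffices to show that multiplication by $a$ maps $Q$ onto itself, i.e. $aQ=Q$.

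To establish $aQ=Q$ I would use that $Q$ is precisely the unique order-$3$ subgroup of the multiplicative group of nonzero residues $\{1,2,\dots,6\}$ modulo $7$, namely the subgroup of squares. Since each $a\in\{1,2,4\}$ is itself a nonzero square, multiplication by $a$ is a bijection of $Q$ onto itself; equivalently, one can simply check the nine products $\{1,2,4\}\cdot\{1,2,4\}\pmod 7$ directly. Once $aQ=Q$ is known, the complement $\{3,5,6\}$ is also preserved, so $j-i\in Q$ if and only if $a(j-i)\in Q$, which yields arc preservation in both directions (equivalently, both ``arc to arc'' and ``non-arc to non-arc''), completing the argument. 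This lemma is entirely elementary; the only point to be careful about is recognizing the group structure of $Q$ so that the single identity $aQ=Q$ does all the work, rather than checking the relation for each ordered pair of vertices separately.
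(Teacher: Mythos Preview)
Your argument is correct: bijectivity follows from $\gcd(a,7)=1$, and arc preservation reduces to the single identity $aQ=Q$ for $Q=\{1,2,4\}$, which holds because $Q$ is the subgroup of nonzero squares in $(\mathbb{Z}/7\mathbb{Z})^{\times}$ and $a\in Q$. Note that the paper itself does not prove this lemma; it simply quotes the result from \cite{sze:ii:ipl}, so there is no in-paper proof to compare against.
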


\begin{lem}[see~\cite{sze:ii:ipl}]\label{l1}
Let $G$ be an oriented graph, $h$ a homomorphism from $G$
into $T_7$, and let $G'$ be the oriented graph obtained from $G$ by reversing
all arcs. More precisely, $(u,v)$ is an arc in $G'$ if and only if
$(v,u)$ is an arc in $G$.

Then the function $f(x)=-h(x)\bmod7$ is a homomorphism from $G'$ into $T_7$.
\end{lem}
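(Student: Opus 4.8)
The plan is to check directly that $f$ sends arcs of $G'$ to arcs of $T_7$, which is exactly what is needed for $f$ to be a homomorphism $G'\to T_7$. The argument hinges on one arithmetic feature of $T_7$: the negatives of the three defining residues, $\{-1,-2,-4\}\equiv\{6,5,3\}\pmod 7$, are precisely the non-residues $\{3,5,6\}$, the complement of $\{1,2,4\}$ in the nonzero classes. Hence negation $x\mapsto -x$ is an \emph{anti}-automorphism of $T_7$: it sends every arc to the reverse of an arc, equivalently $(i,j)$ is an arc of $T_7$ if and only if $(-j,-i)$ is an arc of $T_7$. I would isolate this observation at the outset.

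With it in hand, take an arbitrary arc $(u,v)$ of $G'$ and chase the definitions. By the construction of $G'$, the pair $(v,u)$ is an arc of $G$; since $h$ is a homomorphism from $G$ into $T_7$, the pair $(h(v),h(u))$ is then an arc of $T_7$, that is $h(u)-h(v)\in\{1,2,4\}\pmod 7$. Now compute $f(v)-f(u)=-h(v)-(-h(u))=h(u)-h(v)$, so $f(v)-f(u)\in\{1,2,4\}\pmod 7$ as well, which is exactly the condition for $(f(u),f(v))$ to be an arc of $T_7$. Since $(u,v)$ was arbitrary, $f$ maps every arc of $G'$ to an arc of $T_7$ and is the desired homomorphism.

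Conceptually the proof is the cancellation of two arc reversals: passing from $G$ to $G'$ reverses each arc, and composing $h$ with negation reverses each arc once more, so the net map preserves orientation. Because of this cancellation I do not expect any real obstacle; the only point demanding care is the sign bookkeeping modulo $7$ together with the direction of the arc in $G'$ versus $G$, both of which the one-line computation above settles. Note in particular that no appeal to Lemma~\ref{l3} is possible here, since $x\mapsto -x$ corresponds to the multiplier $-1\equiv 6\notin\{1,2,4\}$ and therefore lies outside the family of automorphisms given there; the content of the statement is precisely that this map, while not an automorphism, becomes a homomorphism once the arcs of $G$ are reversed.
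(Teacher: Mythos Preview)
Your proof is correct: the direct computation $f(v)-f(u)=h(u)-h(v)\in\{1,2,4\}$ is exactly what is needed, and your remark that negation is an anti-automorphism of $T_7$ (since $-\{1,2,4\}\equiv\{3,5,6\}\pmod 7$) is the right conceptual explanation. Note that the paper does not actually prove this lemma; it simply quotes it from \cite{sze:ii:ipl}, so there is no in-paper argument to compare against. Your verification is the natural one and matches what one would expect the cited source to contain.
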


In the sequel we shall simply write $ax+b$ instead of $ax+b \pmod7$ when writing about homomorphisms of $T_7$.

A fan $F$ is an oriented planar graph which consists of a rooted oriented 
tree with a root $r$ and leaves $x_1,\dots,x_m$; and for every 
$1\le i\le m-1$, the leaves $x_i$ and $x_{i+1}$ are 
connected by an arc: $(x_i,x_{i+1})$ or $(x_{i+1},x_i)$.
We shall denote the root of $F$ by $r(F)$, the first leaf $x_1$ by
$fl(F)$, and the last leaf $x_m$ by $ll(F)$.

Note that if we  remove one vertex or arc from 
the exterior cycle of a Halin graph, then we obtain a fan.
But we shall also consider other fans, e.g  with one leaf, $m=1$, or 
with the root having only one son.

Suppose we have two fans $F_1$ and $F_2$. We can compose them
in one fan $F$, denoted by $F_1+F_2$, in the following way, see Fig. ~\ref{fig1}:

\begin{figure}[tbp]
\centering
\includegraphics{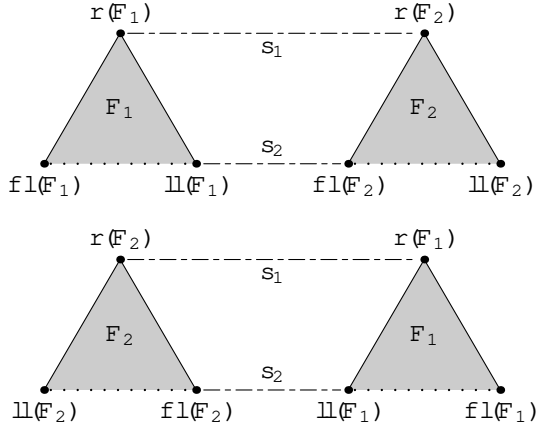}
\caption{Two ways to compose $F_1+F_2$.}
\label{fig1}
\end{figure}

\begin{itemize}
\item root of $F_1$ becomes the root of $F$, i.e. $r(F):=r(F_1)$,
\item $r(F_1)$ is joined with $r(F_2)$ by an arc $s_1$, i.e.
$s_1=(r(F_1),r(F_2))$ or $s_1=(r(F_2),r(F_1))$,
\item $ll(F_1)$ is joined with $fl(F_2)$ by an arc $s_2$,
\item the first leaf of $F_1$ becomes the first leaf of $F$, 
i.e. $fl(F):=fl(F_1)$,
\item the last leaf of $F_2$ becomes the last leaf of $F$, 
i.e. $ll(F):=ll(F_2)$, and $r(F_2)$ becomes the last son of $r(F_1)$.
\end{itemize}

\begin{lem}\label{comp}
Suppose there are two fans $F_1$ and $F_2$ and colorings $c_i:F_i\to T_7$, 
for $i=1,2$, such that
in each fan $F_i$ the root is colored with zero, 
$c_i(r(F_i))=0$, and both the first and the last leaves with non zero,
$c_i(fl(F_i))\ne 0$, $c_i(ll(F_i))\ne0$. 
Then for every direction of the arcs $s_1$ and $s_2$,
the composition $F=F_1+F_2$ can be colored with two colorings
$d_1,d_2: F\to T_7$ such that:
\begin{itemize}
\item[(c1)] $d_1(x)=d_2(x)=c_1(x)$, for all $x\in F_1$
\item[(c2)] $d_1(ll(F))\ne d_2(ll(F))$ 
\end{itemize}
In other words colorings $d_1$ and $d_2$ are equal to $c_1$ on $F_1$
(in particular $d_1(r(F))=d_2(r(F))=0$ and $d_1(fl(F))=d_2(fl(F))\ne0$ ) 
and differ on the last leaf of $F$ (it is possible that one of 
$d_1(ll(F))$ or $d_2(ll(F))$ is equal to zero, but not both).
\end{lem}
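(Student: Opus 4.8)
The plan is to leave both colorings unchanged on $F_1$, setting $d_1=d_2=c_1$ there, so that condition (c1) holds by fiat and the root of $F$ keeps color $0$ while $ll(F_1)$ keeps the fixed color $\gamma:=c_1(ll(F_1))\ne0$. It then remains only to color $F_2$. Since any automorphism $\phi(x)=ax+b$ of $T_7$ (Lemma~\ref{l3}) turns the homomorphism $c_2$ into another homomorphism $\phi\circ c_2$ of $F_2$, I would look for the two extensions among the maps $\phi_i\circ c_2$. The only edges of $F$ lying outside $F_1$ and $F_2$ are $s_1$ (joining the two roots) and $s_2$ (joining $ll(F_1)$ to $fl(F_2)$), so a map equal to $c_1$ on $F_1$ and to $\phi\circ c_2$ on $F_2$ is a valid coloring of $F$ as soon as the images of $s_1$ and $s_2$ are arcs of $T_7$.

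Writing $Q=\{1,2,4\}$ for the out-neighbors of $0$ and $N=\{3,5,6\}=-Q$ for its in-neighbors, the two arc conditions read: the root color $\phi(0)=b$ must lie in $S_1$, and the first-leaf color $\phi(\delta)=a\delta+b$ must lie in $S_2$, where $\delta:=c_2(fl(F_2))\ne0$, $S_1\in\{Q,N\}$ is dictated by the direction of $s_1$, and $S_2\in\{\gamma+Q,\gamma+N\}$ by that of $s_2$. The quantity I ultimately care about is the last-leaf color $\phi(\epsilon)=a\epsilon+b$, where $\epsilon:=c_2(ll(F_2))\ne0$; crucially it is an explicit affine function $g$ of the pair $(\phi(0),\phi(\delta))$, since $a=(\phi(\delta)-\phi(0))\delta^{-1}$. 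Thus the task reduces to finding two admissible pairs $(b,b')\in S_1\times S_2$ with $b'-b\in\delta Q$ (this membership is exactly the requirement that the implied $a$ be a quadratic residue) whose $g$-values differ.

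Before computing I would halve the number of direction cases using Lemma~\ref{l1}: reversing every arc of $F$ and negating all colors sends the whole configuration to another one satisfying the hypotheses (the root color stays $0$, the leaf colors stay non-zero) but with both $s_1$ and $s_2$ reversed, so I may assume $s_1$ points from $r(F_1)$ to $r(F_2)$, i.e.\ $S_1=Q$. To count admissible pairs I would invoke the elementary difference-set identities $Q-Q=N-N=\{0,1,\dots,6\}$ and $Q-N=N-Q=\{1,\dots,6\}$: these show that $S_1\cap(S_2-a\delta)$ is non-empty for every $a$ when $S_1$ and $S_2$ are cosets of the same type, and for at least two of the three values $a\in Q$ otherwise. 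Hence admissible pairs always exist in abundance (at least two, indeed at least three).

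The step I expect to be the main obstacle is the final distinctness check. Two admissible pairs sharing the same multiplier $a$ automatically yield different last leaves, since their $b$-values differ and $a\epsilon+b$ is injective in $b$; hence whenever some intersection $S_1\cap(S_2-a\delta)$ has size at least two, the pair of colorings is obtained at once. The residual situation is that every admissible $a$ determines a unique $b$, leaving two or three admissible pairs with pairwise distinct multipliers; there I would verify that $g$ is not constant on them, i.e.\ that the coincidence condition $(a_1-a_2)\epsilon=b_2-b_1$ can be avoided. I expect this to reduce to a short finite residue computation in the two surviving direction cases $S_2=\gamma+Q$ and $S_2=\gamma+N$, organized according to whether $\gamma$ and $\delta$ lie in the same coset of $Q$.
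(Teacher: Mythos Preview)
Your proposal is correct and follows exactly the paper's approach: keep $c_1$ fixed on $F_1$, recolor $F_2$ by automorphisms $\phi(x)=ax+b$ of $T_7$, halve the direction cases via the arc-reversal lemma, and then find two admissible $\phi$'s disagreeing at $\epsilon=c_2(ll(F_2))$. The only difference is organizational: the paper further normalizes $\gamma,\delta\in\{1,3\}$ (by multiplying $c_1$ or $c_2$ by an element of $Q$) and then exhibits $\phi_1,\phi_2$ via an explicit eight-row table with one exceptional sub-case, whereas you plan a uniform counting argument through the cosets $Q,N$ and their difference sets---your anticipated ``short finite residue computation'' does go through, and in fact the only residual case (all admissible $a$'s determining a unique $b$) is $S_2=\gamma+Q$ with $\gamma\delta^{-1}\in N$, where the three $\epsilon$-values cannot all coincide.
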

\begin{proof}
The main idea of the proof is to find two authomorphisms $\phi_1$ 
and $\phi_2$ of $T_7$ which change the coloring $c_2$ on $F_2$ in 
such a way that the new colorings $\phi_1\circ c_2$ and $\phi_2\circ c_2$
will fit to coloring $c_1$ on $F_1$, and be different from each other 
on $ll(F_2)$. 
More precisely, we shall show that for any colors 
$c_1(ll(F_1))$, $c_2(fl(F_2))$, $c_2(ll(F_2))$ and any direction of 
$s_1$ and $s_2$, there exist two authomorphisms 
$$\phi_1, \phi_2:T_7\to T_7$$
such that the colorings $d_1$ and $d_2$ defined by:

$$d_1(x) =\left\{
\begin{array}{ll}
 c_1(x) &\textrm{ if $x\in F_1$}\\
 \phi_1(c_2(x)) &\textrm{ if $x\in F_2$}
\end{array}
\right.
$$
and

$$d_2(x) =\left\{
\begin{array}{ll}
 c_1(x) &\textrm{ if $x\in F_1$}\\
 \phi_2(c_2(x)) &\textrm{ if $x\in F_2$}
\end{array}
\right.
$$
color the composition $F=F_1+F_2$ in a proper way and satisfy (c2).

We can assume that $s_1$ goes from $r(F_1)$ to $r(F_2)$.
Otherwise we can reverse all arcs, negate all colors, color $F_1+F_2$,
and reverse arcs and negate colors back.
We can also assume that $c_1(ll(F_1))$, and $c_2(fl(F_2))$, are both in 
$\{1,3\}$. Otherwise we can multiply all colors in $F_1$, or $F_2$, by 2 
or 4.

Consider first the case when
$c_1(ll(F_1))=3$, $c_2(fl(F_2))$ $=1$, and $s_2$ goes 
from $ll(F_1)$ to $fl(F_2)$.
In this case we first consider authomorhisms 
$\phi_1(x)=x+4$ and  $\phi_2(x)=2x+2$.
The arc $s_1=(r(F_1),r(F_2))$ has colors $(0,\phi_1(0))=(0,4)$ or 
$(0,\phi_2(0))=(0,2)$ which are proper.
The arc $s_2=(ll(F_1),fl(F_2))$~~has colors~~ 
$(c_1(ll(F_1),\phi_1(c_2(fl(F_2)))=$ 
$(3,\phi_1(1))=(3,5)$ or 
$(3,\phi_2(1))=(3,4)$
 which are proper.
Moreover, for every color $x\ne 2$, $\phi_1(x)\ne\phi_2(x)$. 
Hence, if $c_2(ll(F_2))\ne 2$, then
$\phi_1(c_2(ll(F_2)))\ne \phi_2(c_2(ll(F_2)))$, so 
$d_1(ll(F_2)) \ne d_2(ll(F_2))$.

If $c_2(ll(F_2))=2$, then we make change and set $\phi_2(x)=4x+4$.
The new $\phi_2$ also gives a proper coloring $d_2$, and
$d_1(ll(F_2))\ne d_2(ll(F_2))$, if $c_2(ll(F_2))=2$,.
For all other cases the definitions of $\phi_1$ and $\phi_2$ are given
in Table~\ref{tb1}.
For every $\phi_i$ from the table, the arc $s_1=(r(F_1),r(F_2))$ has 
colors $(0,\phi_i(0))=(0,1)$, $(0,2)$ or $(0,4)$ which are proper.
It is easy to see that in every case, the coloring of the arc $s_2$,
with color $c_1(ll(F_1))$ on one end and $\phi_i(c_2(fl(F_2)))$ on the
other, is proper.
In every line in the table, $\phi_1(x)\ne\phi_2(x)$, 
for every $x\ne0$. Hence, 
$d_1(ll(F_2))=\phi_1(c_2(ll(F_2)))\ne \phi_2(c_2(ll(F_2)))=d_2(ll(F_2))$,
for every $c_2(ll(F_2))$. From the lemma assumpitons, $c_2(ll(F_2))\ne 0$.
\end{proof}

\begin{table*}[thb]
\centering
\begin{tabular}{c|c||c|c|c||c|c}
\hline
& & & & &$s_2$ colored & $s_2$ colored\\
$c_1(ll(F_1))$ & $c_2(fl(F_2))$ & $s_2$ & $\phi_1(x)$ & $\phi_2(x)$ & by $d_1$ & by $d_2$\\
\hline
\hline

1&	1&	$ll(F_1) \rightarrow fl(F_2)$&	
$x+1$&	$x+2$&	$1 \rightarrow 2$&	$1 \rightarrow 3$\\
1&	1&	$ll(F_1) \leftarrow fl(F_2)$&	
$2x+2$&	$2x+4$&	$1 \leftarrow 4$&	$1 \leftarrow 6$\\
1&	3&	$ll(F_1) \rightarrow fl(F_2)$&	
$2x+4$&	$4x+4$&	$1 \rightarrow 3$&	$1 \rightarrow 2$\\
1&	3&	$ll(F_1) \leftarrow fl(F_2)$&	
$x+1$&	$x+4$&	$1 \leftarrow 4$&	$1 \leftarrow 0$\\
\hline
\hline

3& 1& $ll(F_1) \rightarrow fl(F_2)$&\multicolumn{4}{c}{considered separately}\\
3&	1&	$ll(F_1) \leftarrow fl(F_2)$&	
$4x+2$&	$4x+4$&	$3 \leftarrow 6$&	$3 \leftarrow 1$\\
3&	3&	$ll(F_1) \rightarrow fl(F_2)$&	
$x+1$&	$x+2$&	$3  \rightarrow 4$&	$3  \rightarrow 5$\\
3&	3&	$ll(F_1) \leftarrow fl(F_2)$&	
$4x+1$&	$4x+4$&	$3 \leftarrow 6$&	$3 \leftarrow 2$\\
\hline
\end{tabular}
\caption{Definition of $\phi_1(x)$ and $\phi_2(x)$.}
\label{tb1}
\end{table*}


\begin{lem}\label{fan}
For every fan $F$, there is a coloring $c:F\to T_7$ such that
$c(r(F))=0$, $c(fl(F))\ne0$, and $c(ll(F))\ne0$.
\end{lem}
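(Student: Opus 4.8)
The plan is to argue by induction on the number of vertices of $F$, using the composition Lemma~\ref{comp} as the engine of the inductive step. For the base case I would treat $m=1$, where $F$ is a directed path from $r(F)$ to its single leaf. I would color the root with $0$ and extend one vertex at a time: since $T_7$ is a tournament in which every vertex has exactly three out-neighbors and three in-neighbors, each new vertex has three admissible colors, and at the last vertex at least two of them are non-zero, so the leaf can be forced to be non-zero. This disposes of all one-leaf fans directly.

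For the inductive step with $m\ge 2$ I would examine the sons $s_1,\dots,s_k$ of $r=r(F)$ in planar order. In the principal case the last son $s_k$ is an interior vertex: the subtree hanging from $s_k$ together with the arcs among the leaves it carries is a fan $F_2$ with $r(F_2)=s_k$, and the remainder of $F$ is a fan $F_1$ with $r(F_1)=r$, so that $F=F_1+F_2$ directly from the definition, with both pieces strictly smaller. By induction each $F_i$ admits a coloring with root $0$ and both leaves non-zero, which are precisely the hypotheses of Lemma~\ref{comp}. The lemma then returns two colorings $d_1,d_2$ of $F$ that agree with the coloring of $F_1$ — so each keeps $r(F)$ at $0$ and $fl(F)=fl(F_1)$ non-zero — and that differ at $ll(F)$. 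Since $d_1(ll(F))\ne d_2(ll(F))$, at most one of the two can be zero there, and I would retain the coloring whose value at $ll(F)$ is non-zero; this meets all three requirements at once.

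The real obstacle is the degenerate configurations in which Lemma~\ref{comp} does not apply, namely when $r$ has a single son or when the last son $s_k$ is itself a leaf, for then one prospective sub-fan collapses to a single vertex and cannot have its root colored $0$ and its leaf colored non-zero simultaneously. I would treat these by a delete--recurse--recolor step instead of a composition. If $r$ has a unique son $s$, I would delete $r$, color the fan rooted at $s$ by induction (so that $s$ gets $0$), reinstate $r$ with color $0$, and push the inherited coloring through a translation $\phi(x)=x+b$ (an automorphism by Lemma~\ref{l3}), choosing $b$ in the three-element set of colors that make the arc between $r$ and $s$ proper while avoiding the at most two values that would send $fl(F)$ or $ll(F)$ to $0$; such a $b$ always survives. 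The pendant-leaf case is analogous: delete $x_m$, color the smaller fan by induction, and reinsert $x_m$ with a non-zero color compatible with its two neighbors, exploiting the multiplication automorphisms $x\mapsto ax$ (which fix $0$ and preserve non-zeroness) to adjust the smaller coloring until a color admissible at both neighbors exists. I expect this re-coloring bookkeeping, rather than the composition itself, to be the hardest part; throughout I would lean on Lemma~\ref{l1} to reverse all arcs and negate all colors whenever it helps normalize the direction of the critical arc and halve the number of sub-cases.
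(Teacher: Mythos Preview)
Your proposal is correct and follows essentially the same approach as the paper: induction on the number of vertices with the same three-way case split on the root's sons, using Lemma~\ref{comp} for the principal case and the translation $x\mapsto x+b$ for the unique-son case. In the pendant-leaf case the paper simply observes that a valid nonzero color for the deleted leaf always exists directly (the three colors compatible with the arc to $r$ are all nonzero and always meet the three compatible with the arc to $x_{m-1}$ in at least one point), so your multiplication-automorphism adjustment is never actually needed.
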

\begin{proof}
Proof by induction on $n$ --- the number of vertices of $F$.
If $n=2$, then $F$ consists of the root  and one leaf and the lemma 
is obvious.
If $n\ge 3$, take $r$ --- the root of $F$ and let $x_1, \dots x_k$ be 
its sons. We have three cases.
\begin{enumerate}
\item $k=1$,
\item $k\ge 2$ and the last son $x_k$ belongs to the exterior path,
\item $k\ge 2$ and the last son $x_k$ belongs to the interior tree.
\end{enumerate}

Case 1. $r$ has only one son $x_1$. This son $x_1$ belongs to the interior of $F$,
because $F$ has more than two vertices. Let $F_1$ be the fan rooted in
 $x_1$ and suppose that arc is going from $r$ to $x_1$.
By induction, there is a coloring $c_1:F_1\to T_7$ such that
$c_1(x_1)=0$, and $c_1(fl(F_1))$, $c_1(ll(F_1))\ne0$. Now define
the coloring $c:F\to T_7$ as follows:

$$c(x) =\left\{
\begin{array}{ll}
 0 &\textrm{ if $x=r$}\\
 c_1(x)+b &\textrm{ if $x\in F_1$}
\end{array}
\right.
$$

where $b\in\{1,2,4\}$ is a constant  satisfying conditions 
$c_1(fl(F_1))+b\ne0$ and
$c_1(ll(F_1))+b\ne0$, such $b$ exists.

Case 2. Let $F_1$ be the fan obtained by removing $x_k$.
By induction, there is a proper  coloring $c_1:F_1\to T_7$.
The vertex $x_k$ is connected by arcs only with $r$ (having color 
$c_1(r)=0$) and $ll(F_1)$ (having color $c_1(ll(F_1)\ne0$). It is easy to 
see that $x_k$ can be colored in a proper way.

Case 3. Let $F_2$ be the fan rooted in $x_k$ and $F_1$ the fan obtained 
from $F$ by removing $x_k$ and its descendants. By induction, $F_1$ and 
$F_2$ can be properly colored and, by Lemma~\ref{comp}, also their 
composition $F$ can be colored in a proper way.
\end{proof}
\section{Main result}
\begin{thm}
Every oriented Halin $H$ graph can be colored with 8 colors.
\end{thm}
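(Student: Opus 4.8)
The plan is to realize $H$ as a fan plus one extra edge, to colour the fan with the two closely related $T_7$-colourings furnished by Lemmas~\ref{fan} and~\ref{comp}, and to spend the eighth colour only on closing that single edge. First I would root the tree of $H$ at an interior vertex $r$. Since a Halin tree has no vertex of degree two, $r$ has at least three sons; in particular at least two. Writing the exterior cycle as $\ell_1\ell_2\cdots\ell_p$ and deleting the edge $e=\{\ell_1,\ell_p\}$ leaves, by the remark following the definition of a fan, a fan $F$ with $r(F)=r$, $fl(F)=\ell_1$ and $ll(F)=\ell_p$. Because $r$ has at least two sons, I may split off its last son together with the portion of the exterior path below it and write $F=F_1+F_2$ in the sense of the composition operation, with $fl(F)=fl(F_1)=\ell_1$ and $ll(F)=ll(F_2)=\ell_p$.

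Next I would colour $F_1$ and $F_2$ by Lemma~\ref{fan}, so that each root receives $0$ and each of the four extreme leaves a nonzero colour, and then apply Lemma~\ref{comp}. This yields two colourings $d_1,d_2\colon F\to T_7$ that agree on $F_1$ (in particular $d_i(r)=0$ and $d_i(\ell_1)=p$ for one fixed nonzero colour $p$) and differ on $\ell_p$, with $d_1(\ell_p)\neq d_2(\ell_p)$ and at most one of these equal to $0$. Every edge of $H$ except $e$ is then correctly coloured by both $d_1$ and $d_2$, since each $d_i$ is a valid colouring of the fan $F=H-e$ and the two colourings differ only at $\ell_p$.

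It remains to close $e$. By the reversal symmetry of Lemma~\ref{l1} I may assume $e$ is oriented from $\ell_1$ to $\ell_p$, so the arc is consistent with $T_7$ exactly when the colour of $\ell_p$ is one of the three out-neighbours $p+1,p+2,p+4$ of $p$. If $d_1(\ell_p)$ or $d_2(\ell_p)$ is already such a colour, that colouring colours all of $H$ with the seven colours of $T_7$. Otherwise I assign the eighth colour to $\ell_p$; as this colour occurs nowhere else, the oriented condition can fail only between an in-arc and an out-arc at $\ell_p$, that is, only if some in-neighbour and some out-neighbour of $\ell_p$ receive the same $T_7$-colour. The neighbours of $\ell_p$ are only $\ell_1$, its cycle-predecessor $\ell_{p-1}$, and its tree-parent, so this is a finite check on three colours and the orientations of two arcs, which I would discharge using the freedom remaining in the fan colouring (and, where needed, a further application of Lemma~\ref{l1}) to keep those three colours suitably separated.

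The hard part is exactly this last step: showing that whenever neither $T_7$-colouring closes $e$, the eighth colour can nevertheless be placed at $\ell_p$ without creating a forbidden pair of arcs. This is where the two-colouring flexibility built into Lemma~\ref{comp} is essential, since a single $T_7$-colouring of the fan would leave too little room to simultaneously respect the orientation of $e$ and avoid an in/out colour clash at $\ell_p$.
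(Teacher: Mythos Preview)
Your plan shares the paper's skeleton—delete one exterior arc to obtain a fan, colour it via Lemmas~\ref{fan} and~\ref{comp}, and spend the eighth colour on closing the cycle—but you cut at an arbitrary edge and propose to place colour $7$ at $\ell_p=ll(F)$, and this is precisely where the argument breaks down. The flexibility Lemma~\ref{comp} gives you is two different colours \emph{at} $\ell_p$; once you overwrite that with $7$, what matters is only the colours of the three neighbours $\ell_1,\ell_{p-1},t$, and over those the lemma gives you almost no control: $d_1(\ell_1)=d_2(\ell_1)$ is fixed, while $\ell_{p-1}$ and $t$ lie in $F_2$ and receive $\phi_i(c_2(\cdot))$, which need not separate in any useful way. (Your assertion that $d_1,d_2$ ``differ only at $\ell_p$'' is also incorrect—they differ throughout $F_2$.) So your ``finite check'' is not a routine discharge; there is no mechanism in your setup guaranteeing that for at least one $i$ the in- and out-neighbour colours at $\ell_p$ are disjoint.

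The paper avoids this by choosing the cut and the vertex receiving colour $7$ according to the arc orientations. If the exterior cycle is not consistently oriented it finds a sink $v_2$ on the cycle, removes $v_2$ (not an edge), colours the resulting fan with $c(r)=0$, and reinserts $v_2$ with colour $7$; since both cycle neighbours are in-neighbours and $r$ has colour $0\neq c(v_1),c(v_3)$, no in/out clash is possible. If the cycle is consistently oriented it takes a lowest interior vertex $r$ with sons $x_1,\dots,x_k$, splits $F_1+F_2$ so that the colour-flexibility from Lemma~\ref{comp} lands at $x_0$ (or $x_{k+1}$), and then recolours the \emph{adjacent} vertex $x_1$ (or $x_k$) with $7$; now the two available colours at $x_0$ directly control an in-neighbour of the recoloured vertex, and the remaining constraint $d_i(x_2)\neq 0$ follows because $x_2$ is a son of $r$. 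A further minor gap in your outline: when the last son of $r$ is already the leaf $\ell_p$, your $F_2$ degenerates to a single vertex and Lemma~\ref{fan} does not apply; the paper handles that situation separately rather than through Lemma~\ref{comp}.
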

\begin{proof}
If $H$ has only 3, 4, or 5 vertices on the exterior cycle, then we 
can color  them with at most five colors, each vertex with different 
color, and the interior tree with additional three colors. 
Hence, in the sequel we shall consider Halin graphs with at least six 
vertices on the exterior cycle.

If not all arcs on the exterior cycle are going in the same direction, 
then we have three vertices $v_1$, $v_2$, $v_3$ on the exterior cycle and 
arcs $(v_1,v_2)$, $(v_3,v_2)$; and let $r$ be the father of 
$v_2$ in the interior tree ($r$ does not have to be the father of $v_1$ or $v_3$),
see Fig.~\ref{fig2}.
\begin{figure}[tbp]
\centering
\includegraphics{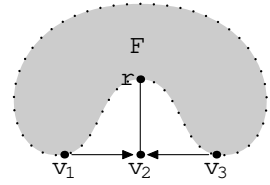}
\caption{Graph with opposite arcs on external cycle.}
\label{fig2}
\end{figure}

Remove $v_2$, and consider the fan $F$ with the root in $r$ and leaves
going from $v_1$ to $v_3$ around the whole graph $H$. 
By Lemma~\ref{fan}, there is a coloring $c:F\to T_7$ such that 
$c(r)=0$, $c(v_1)\ne0$, and $c(v_3)\ne0$. Now we put back $v_2$ with color 7. 

In the sequel we shall consider Halin graphs with all arcs on the exterior 
cycle  going in the same direction. Suppose first that
there are at least two vertices in the interior. Let $r$ be one on 
the lowest level of the interior tree, $p$ be its father in the interior, 
and $x_1, \dots,x_k$ the sons of $r$ on the cycle, see Fig.~\ref{fig4}. Let $x_0$ be the 
predecessor of $x_1$ on the cycle, and $x_{k+1}$ the successor of $x_k$ 
(it is   possible that $x_0=x_{k+1}$).
The arcs on the cycle are going from $x_0$ to $x_{1}$ and so on.
We have four cases:
\begin{enumerate}
\item $k=1$,
\item $k\ge 2$ and there is the arcs $(r,x_1)$,
\item $k\ge 2$ and there is the arcs $(x_k,r)$,
\item $k\ge2$ and there are arcs $(x_i,r)$ and $(r,x_{i+1})$, for some 
$1\le i\le k-1$.
\end{enumerate}

Case 1. If $k=1$, see Fig.~\ref{fig3}, we remove $x_1$ and $r$, and obtain the 
fan with the root in $p$.
\begin{figure}[tbp]
\centering
\includegraphics{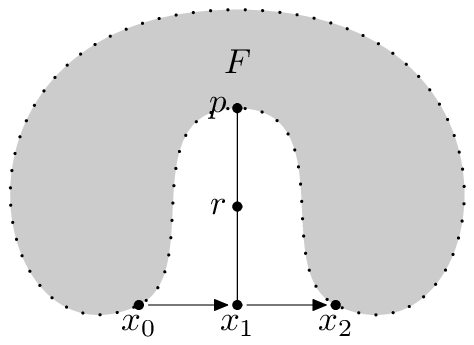}
\caption{Case 1.}
\label{fig3}
\end{figure}
By Lemma~\ref{fan}, we can color it with $c(p)=0$, $c(x_0)\ne0$,
and $c(x_2)\ne0$.
If $c(x_0)\ne c(x_2)$, then we set $x_1$ to 7, and for $r$ we choose 
the color that fits to the color in $p$ and is different from 
$c(x_0)$ and $c(x_2)$ (there are three colors for $r$ that fit to the 
color in $p$).

If $c(x_0)=c(x_2)$,  consider first colors for $r$ and $x_1$ which 
are in accordance only with  two arcs: one  joining $p$ with $r$ and the 
other, joining $r$ with $x_1$. We can obtain at least 6 different colors 
for $x_1$,
and either three of them are proper for the arc $(x_0,x_1)$ or
three are proper for $(x_1,x_2)$. 
In the former case we set $x_2$ to 7, put back $x_1$ and $r$, and color $x_1$ and $r$ in such a way that color in $x_1$ is different from the colors of the neighbors of $x_2$. In the later case we set $x_0$ to 7 and the color in $x_1$ should be different from the colors of the neighbors of $x_0$.

Case 2.
$k\ge 2$ and the arc is going from $r$ to $x_1$, see Fig.~\ref{fig4}.
\begin{figure}[tbp]
\centering
\includegraphics{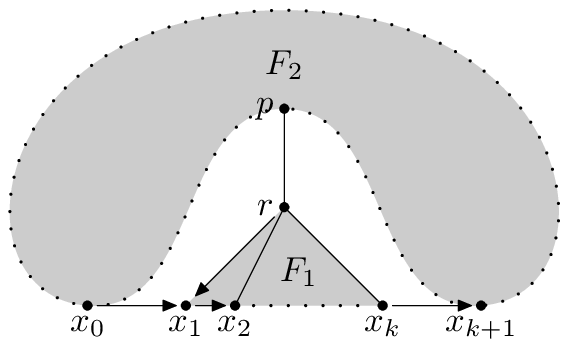}
\caption{Case 2.}
\label{fig4}
\end{figure}
Remove the arc $(x_0, x_1)$. We have two fans:
$F_1$  formed by $r$ and its sons on exterior cycle, and $F_2$ formed 
by $p$ and the other  vertices in $H$.
By  Lemma~\ref{fan}, they can be colored by $T_7$.
Now compose the fans by adding the arc $s_1$ between $p$ and $r$
and $s_2=(x_k,x_{k+1})$.
By  Lemma~\ref{comp}, there are two colorings 
$$d_1,d_2:F_1+F_2\to T_7$$
such that:
\begin{itemize}
\item $d_1(r)=d_2(r)=0$, 
\item $d_1(x_1)=d_2(x_1)\ne0$, 
\item $d_1(x_2)=d_2(x_2)\ne0$,
\item $d_1(x_0)\ne d_2(x_0)$.
\end{itemize}
If $d_1(x_0)=0$ (or $d_2(x_0)=0$), then we simply put the arc 
$(x_0, x_1)$ back.
If $d_1(x_0)\ne0$ and $d_2(x_0)\ne0$, then we choose coloring,
say $d_1$, which gives $d_1(x_0)\ne d_1(x_2)$, put the arc 
$(x_0, x_1)$ back, and set $x_1$ to 7.

Case 3.
$k\ge 2$ and the arc is going from $x_k$ to $r$, see Fig.~\ref{fig5}.
\begin{figure}[tbp]
\centering
\includegraphics{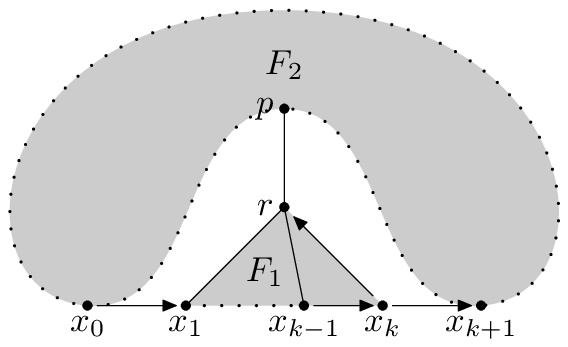}
\caption{Case 3.}
\label{fig5}
\end{figure}
Remove the arc $(x_k, x_{k+1})$. Again consider fans
$F_1$ formed by $r$ and its sons and $F_2$ formed by $p$ and the
other vertices in $H$.
But now they are composed in a different way.
We add the arc $s_1$ between $p$ and $r$
and $s_2=(x_0,x_1)$.
By  Lemma~\ref{comp}, there are two colorings 
$$d_1,d_2:F_1+F_2\to T_7$$
such that: 
\begin{itemize}
\item $d_1(r)=d_2(r)=0$, 
\item $d_1(x_k)=d_2(x_k)\ne0$, 
\item$d_1(x_{k-1})=d_2(x_{k-1})\ne0$,
\item$d_1(x_{k+1})\ne d_2(x_{k+1})$.
\end{itemize}
If $d_1(x_{k+1})=0$ (or $d_2(x_{k+1})=0$), then we simply put the arc 
$(x_k, x_{k+1})$ back.
If $d_1(x_{k+1})\ne0$ and $d_2(x_{k+1})\ne0$, then we choose coloring,
say $d_1$, which gives $d_1(x_{k-1})\ne d_1(x_{k+1})$, put the arc 
$(x_k, x_{k+1})$ back, and set $x_k$ to 7.

Case 4. 
$k\ge2$ and there are arcs $(x_i,r)$ and $(r,x_{i+1})$, for some $1\le i\le k-1$, see Fig.~\ref{fig6}.
\begin{figure}[tbp]
\centering
\includegraphics{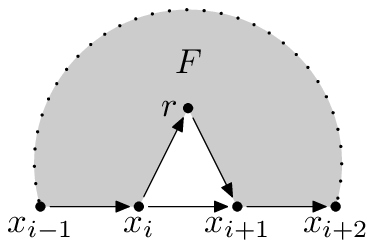}
\caption{Case 4.}
\label{fig6}
\end{figure}
We remove the arc $x_{i}\to x_{i+1}$
and obtain the fan $F$ with the root in $r$ and leaves
going from $x_i$ to $x_{i+1}$ around the whole graph $H$. 
By Lemma~\ref{fan}, there is a coloring $c:F\to T_7$ such that 
$c(r)=0$. Observe that none of $x_{i-1}$, $x_{i}$, $x_{i+1}$, $x_{i+2}$,
 is set to 0.
If $c(x_{i-1})\ne c(x_{i+1})$, then we  can put back the edge 
$(x_{i},x_{i+1})$ and set $x_{i}$ to 7. 
If $c(x_{i})\ne c(x_{i+2})$, then we  can put back the edge 
$(x_{i},x_{i+1})$ and set $x_{i+1}$ to 7. 
If $c(x_{i})=c(x_{i+2})$ and $c(x_{i-1})=c(x_{i+1})$, then $c(x_{i+2})\in \{3,5,6\}$ and we can 
change the color in $x_{i+1}$ (without changing other colors) and set color of $x_i$ to 7.

What is left is the case where there is only one vertex $r$ in the interior
and  all arcs on the exterior cycle are going in the same direction. There are two cases:

\begin{itemize}
\item all arcs incident with $r$ are going in one direction.
Then color exterior with 5 colors and add sixth color to $r$,
\item not all arcs incident with $r$ are going in the same direction.
Then we have the situation described in Case 4 above.
\end{itemize}
\end{proof}


\begin{thebibliography}{99}
\bibitem{biel:i:ann}
H. Bielak, 
The oriented chromatic number of some grids,
{\itshape Annales UMCS Informatica AI} 5
(2006),
5-–17.

\bibitem{bor:ii:dis}
O. V. Borodin, A. V. Kostochka, J. Ne\v{s}e\v{r}il, A. Raspaud, E. Sopena,
On the maximum average degree and the oriented chromatic number of a graph,
{\itshape Discrete Math. 206}
(1999),
77--89.

\bibitem{dol:ii:dis}
M. H. Dolama, E. Sopena,
On the oriented chromatic number of graphs with given excess,
{\itshape Discrete Math. 306}
(2006), 
1342--1350.

\bibitem{dyn:i:ipl}
J. Dybizba\'nski, A. Nenca,
Oriented chromatic number of grids is greater than 7,
{\itshape Inform. Process. Lett. 112}
(2012),
113–-117.

\bibitem{fer:i:ipl}
G. Fertin, A. Raspaud, A. Roychowdhury,
On the oriented chromatic number  of grids,
{\itshape Inform. Process. Lett. 85} 5
(2003),
261--266.

\bibitem{Fr70} 
E. Fried, 
On homogeneous tournaments, 
{\itshape Combinatorial theory and its applications}, Vol. II (ed. P.~Erd\"os et al.), North-Holland,
Amsterdam (1970), 467--476.


\bibitem{hos:ii:ipl}
M. Hosseini Dolama, E. Sopena,
On the oriented chromatic number of Halin graphs,
{\itshape Inform. Process. Lett. 98} 6
(2006)
6,
247--252.  

\bibitem{kos:ii:gra}
A. V. Kostochka, E. Sopena, X. Zhu,
Acyclic and oriented chromatic numbers of graphs,
{\itshape J.Graph Theory 24(4)}
(1997),
331--340.

\bibitem{ras:ii:ipl}
A. Raspaud, E. Sopena,
Good and semi-strong colorings of oriented planar graphs,
{\itshape Inform. Process. Lett. 51} 4 
(1994)
171--174.


\bibitem{so2:ii:dis}
E. Sopena,
Oriented graph coloring,
{\itshape Discrete Math. 229}
(2001),
359--369.

\bibitem{Sop:i:pre} 
E. Sopena, 
The oriented chromatic number of graphs: A short survey, 
preprint 
(2013).

\bibitem{so1:ii:gra}
E. Sopena,
The chromatic number of oriented graphs,
{\itshape J.Graph Theory 25}
(1997),
191--205.

\bibitem{so4:ii:ipl}
E. Sopena,
There exist oriented planar graphs with oriented chromatic number at least sixteen,
{\itshape  Inform. Process. Lett. 81} 6
(2002),
309--312.

\bibitem{so3:ii:bor}
E. Sopena, L. Vignal,
A note on the oriented chromatic number of graphs with maximum degree three,
{\itshape Research Report, Bordeaux I University},
(1996).

\bibitem{sze:ii:ipl}
A. Szepietowski, M. Targan,
A note on the oriented chromatic number of grids,
{\itshape  Inform. Process. Lett. 92} 2
(2004),
65--70.

\bibitem{Vig}
L. Vignal, 
Homomorphismes et coloration de graphes orientés (in French),
{\itshape PhD Thesis}, 
University of Bordeaux 1, 
(1997) .






\end{thebibliography}
\end{document}